\title{Optimal Pricing is Hard}
\author {
Constantinos Daskalakis\thanks{Supported by a Sloan Foundation Fellowship, a Microsoft Research Faculty Fellowship, and NSF Award CCF-0953960 (CAREER) and CCF-1101491.}\\
EECS, MIT \\
\tt{costis@mit.edu}
\and
Alan Deckelbaum\thanks{Supported by Fannie and John Hertz Foundation Daniel Stroock Fellowship and NSF Award CCF-1101491.}\\
Math, MIT\\
\tt{deckel@mit.edu}
\and
Christos Tzamos\thanks{Supported by NSF Award CCF- 1101491.}\\
EECS, MIT\\
\tt{tzamos@mit.edu}
}
\newtheorem{theorem}{Theorem}
\newtheorem{definition}{Definition}
\newtheorem{remark}{Remark}
\begin{document}
\maketitle

\begin{abstract}
We show that computing the revenue-optimal deterministic auction in
unit-demand single-buyer Bayesian settings, i.e. the optimal item-pricing, is computationally hard even in single-item settings where the buyer's value distribution is a sum of independently distributed attributes, or multi-item settings where the buyer's values for the items are independent. We also show that it is intractable to optimally price the grand bundle of multiple items for an additive bidder whose values for the items are independent. These difficulties stem from implicit definitions of a value distribution. We provide three instances of how different properties of implicit distributions can lead to intractability: the first is a $\#P$-hardness proof, while the remaining two are reductions from the SQRT-SUM problem of Garey, Graham, and Johnson~\cite{DBLP:conf/stoc/GareyGJ76}. While simple pricing schemes can oftentimes approximate the best scheme in revenue, they can have drastically different underlying structure. We argue therefore that either the specification of the input distribution must be highly restricted in format, or it is necessary for the goal to be mere approximation to the optimal scheme's revenue instead of computing properties of the scheme itself.
\end{abstract}

\section{Introduction}

Designing auctions to maximize revenue in a Bayesian setting is a problem of high importance in both theoretical and applied economics~\cite{milgrom2004putting,myerson1981,Nisan:2007:AGT:1296179}. While substantial progress has been made on designing mechanisms with revenue guarantees that are approximately optimal~\cite{BhattacharyaGGM10,CaiD11,ChawlaHK07,ChawlaHMS10}, the question of determining the optimal mechanism exactly has been much more intricate~\cite{Alaei11,AlaeiFHHM12,CaiDW12,CaiDW12b,DW12,RePEc:huj:dispap:dp542,ManelliV07,optimalsingleitem}.

In this paper, we study the complexity of designing optimal deterministic
auctions for single-bidder problems, i.e. optimal pricing mechanisms. Prior to our work, Briest showed that finding the optimal pricing mechanism for a unit-demand bidder is highly inapproximable when the bidder's values for different items are correlated~\cite{Briest}. Our work complements his by either considering  single-item settings, or multi-item settings with product value distributions. We also investigate the complexity of optimally pricing the grand bundle of multiple items for an additive buyer whose values for the items are independent. For these problems we demonstrate that even when the optimal mechanism can only be one of two possibilities, it can be computationally difficult to determine which one achieves the highest expected revenue. 

\smallskip We note that all hard instances presented in this paper have fully polynomial-time approximation schemes, and thus our results preclude exact algorithms but not computationally efficient approximation schemes. From a practical perspective, a nearly optimal mechanism may be almost as desirable as an exact one. From a theoretical perspective, however, it is important to understand the structure of the exactly optimal mechanism~\cite{myerson1981}, which may be drastically different than that of approximate ones. Computational barriers to determining the best mechanism, such as the ones presented here, reflect barriers to understanding its structure. 


Our results suggest in particular that great care must be taken in how a bidder's value distributions are specified. Intricate distributions can be described succinctly, providing a simple outlet to encode computationally hard problems. We present three concrete scenarios where succinctly-represented distributions lead to computational hardness: Easy-to-describe discrete distributions may have exponential size support, may have mild irrationality in their support, or have mild irrationality in the probabilities they assign. Indeed, many (or all) of these features of discrete distributions can be present in simple continuous distributions. Thus, to obtain a robust theory of optimal Bayesian mechanism design, we must either aim for only approximate revenue guarantees or  severely limit the types and specification format of allowable value distributions.


\section{Preliminaries}

In our model, there is a seller with $n$ items and a buyer whose values for the items $v_1,  ..., v_n$ are random variables drawn from known distributions $F_1,...,F_n$. We will consider both  \emph{unit-demand} and  \emph{additive} buyer types:
\begin{itemize}
\item A (quasi-linear) unit-demand buyer is interested in buying at most one item; if the item prices are $p_1,...,p_n$, the buyer buys the item maximizing his \emph{utility}, $v_i-p_i$, as long as it is positive, breaking ties among the maximizers in some pre-determined way, e.g. lexicographic or in favor of the cheapest/most expensive item.
\item A (quasi-linear) additive buyer values a subset $S$ of items $\sum_{i \in S} v_i$. If subset $S$ is priced $P_S$, his utility for buying that subset is $\sum_{i \in S} v_i - P_S$. The buyer buys the subset of items that maximizes his utility, as long as it is positive, breaking ties among subsets in some pre-determined way.
\end{itemize}

In the case of a unit-demand bidder, the seller's goal is to price the items to optimize the expected price paid by the buyer. Finding the optimal such prices is called the \emph{unit-demand pricing problem}. In the case of an additive bidder, the seller's goal is to price all subsets of items to optimize the expected price paid by the buyer. Of course, the seller may not want to explicitly list the price of every subset but describe their prices in some succinct manner, or may want to offer only some subsets at a finite price. We are particularly interested in the {\em grand bundle pricing problem} where the seller wants to optimally price the set of all items (the grand bundle) and the buyer must take all items or nothing. As shown in \cite{manellivincent}, pricing just the grand bundle is optimal in several natural settings. Furthermore, it oftentimes achieves revenue close to the optimal mechanism~\cite{armstrong,bundling}. Optimally pricing the grand bundle is furthermore interesting in its own right~\cite{fangnorman}.

Finally, the distributions $F_1, ..., F_n$ may be provided explicitly, by listing their support and the probabilities placed on each point in the support, or implicitly giving a closed-form formula for them. In this paper, we study how various ways to describe the distributions affect the complexity of the pricing problem.

\section{Complexity of Sum-of-Attributes Distributions}

We first consider  the problem of optimally pricing a single item for a single buyer whose value for the item is a sum of independent random variables. The probability distribution of the item's value has an exponentially sized support, but has a succinct description in terms of each component variable's distribution. The seller must choose a price $P$ for the item. The buyer will accept the offer (and pay $P$) if his value for it is at least $P$, and will reject the offer (giving the seller zero revenue) if his value is strictly less than $P$. The seller's goal is to choose $P$ to maximize his expected revenue. In fact it follows from Myerson~\cite{myerson1981} that pricing the item at the optimal price is the optimal mechanism in this setting, even among randomized mechanisms.

This problem occurs fairly naturally. When selling a complex product (for example, a car), there are a number of attributes (color, size, etc) that a buyer may or may not value highly, and his value for the product may be the sum of his values for the individual attributes. If his values for the attributes are independent, the buyer's value for the product can be modeled as a sum of independent random variables.

Formally, the problem we study in this section is the following.
\begin{definition}[The Sum-of-Attributes Pricing (SoAP) Problem]
Given $n$ pairs of nonnegative integers $(u_1, v_1), (u_2, v_2), \ldots, (u_n,v_n)$ and rational probabilities $p_1, p_2, \ldots, p_n$, determine the price $P^*$ which maximizes
$P^* \cdot Pr[\sum_{i=1}^n X_i \geq P^*],$
where the $X_i$ are independent random variables taking value $u_i$ with probability $p_i$ and $v_i$ with probability $1-p_i$. 
%
\end{definition}


Notice that we can always view an instance of the sum-of-attributes pricing problem as an instance of the grand bundle pricing problem where we seek the optimal price to sell the ``grand bundle'' of a collection of $n$ items that are independently distributed.


\begin{theorem}
The {Sum-of-Attributes Pricing} problem and the {Grand Bundle Pricing} problem are $\#P$-hard.
\end{theorem}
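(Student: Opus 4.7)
The overall strategy is to reduce from a canonical $\#P$-hard counting problem---computing $c = \Pr[\sum_{i=1}^n X_i \geq T]$ where each $X_i$ is $\{0,a_i\}$-valued Bernoulli with $p_i=1/2$. This quantity equals $|\{S \subseteq [n] : \sum_{i\in S} a_i \geq T\}|/2^n$ and is $\#P$-hard by the standard reduction from \textsc{\#Knapsack}. Since $c$ is always of the form $m/2^n$ for some integer $m \in \{0,\dots,2^n\}$, an algorithm for SoAP will let us pin down $c$ exactly via polynomially many bisection queries of the form ``is $c > \alpha$?'' for carefully chosen rational $\alpha$.

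To implement such a comparison query with a SoAP call, I would augment the attributes $X_1,\ldots,X_n$ of the counting instance with one (or a few) carefully tuned \emph{dummy} attributes. The most natural choice is a single $X_0$ that equals a large value $V$ with probability $q$ and $0$ with probability $1-q$. The resulting sum $Y' = X_0 + \sum_{i=1}^n X_i$ has support split into a ``low cluster'' contained in $[0,N]$ (where $N=\sum a_i$) and a ``high cluster'' contained in $[V,V+N]$. Within the high cluster, the revenue at price $V+k$ equals $q(V+k)\Pr[\sum X_i \geq k]$. The two distinguished candidate prices I want to compare are $V$, with revenue $qV$, and $V+T$, with revenue $q(V+T)c$. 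The comparison $(V+T)c \gtrless V$ is exactly a test of $c$ against the threshold $V/(V+T)$, so choosing $V = \lceil \alpha T/(1-\alpha)\rceil$ (rescaling the $a_i$'s and $T$ so $V$ is integral) yields a SoAP instance that answers ``$c > \alpha$?''.

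The main obstacle is to rule out \emph{stray} optimizers: I need to be sure that no price $V+k$ with $k \notin\{0,T\}$ and no price in the low cluster achieves revenue above $\max\{qV,\,q(V+T)c\}$. In the low cluster this is automatic once $qV \gg N$. In the high cluster it is more delicate: the revenue $(V+k)\Pr[\sum X_i \geq k]$ is a unimodal-looking product of an increasing linear factor and a decreasing step factor, and for a generic distribution on $\sum X_i$ the maximizer is not automatically $0$ or $T$. To force the candidate set down to $\{V, V+T\}$, I would add further shaping dummies---e.g., a block of small Bernoullis concentrating the mass of $Y'$ on two well-separated values, or a deterministic shift combined with another $\{0,L\}$-Bernoulli---so that $\Pr[\sum X_i \geq k]$ is flat away from $k\in\{0,T\}$ in the relevant regime. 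A careful parameter choice (polynomially bounded in the input) suffices, and the revenue comparison then cleanly encodes $c \gtrless \alpha$.

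Running polynomially many such comparisons (binary search on $\alpha$ over the $n+1$ possible values of $m/2^n$) recovers $c$ exactly, giving $\#P$-hardness of SoAP. The Grand Bundle Pricing problem inherits hardness for free: any SoAP instance is literally the instance in which the seller has $n$ independently-distributed items and is restricted to pricing only the grand bundle, so the optimal bundle price equals the optimal SoAP price and the same reduction applies.
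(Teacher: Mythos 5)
Your high-level skeleton matches the paper's (one auxiliary attribute creating two candidate prices, a binary search over a parameter of the instance, and the observation that a SoAP instance is literally a grand-bundle instance), but there is a genuine gap at exactly the step you flag as ``delicate'' and then defer: ruling out stray optimizers in the high cluster while keeping $p_i=1/2$. With $p_i=1/2$ the tail $\Pr[\sum_{i=1}^n X_i\ge k]$ is generically far from flat, and stray prices really do win. For instance, with all $a_i=1$ and $T=n/2$, a query at threshold $\alpha\approx 1/2$ forces $V\approx T$, and the price $V+T-\sqrt n$ earns roughly $(2T)\cdot\Pr[\sum_i X_i\ge T-\sqrt n]$, which is noticeably larger than both candidate revenues $V\approx T$ and $(V+T)c\approx T$; the oracle then returns a price from which you learn nothing about whether $c>\alpha$. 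Moreover, the proposed repair does not work as described: any independent ``shaping dummy'' $Z$ you add replaces the relevant tail by the convolution $\Pr[Z+\sum_i X_i\ge k]$, which neither flattens the tail away from $\{0,T\}$ (convolution only smooths it) nor preserves the identity $\mathrm{revenue}(V+T)=q(V+T)c$, since the probability of exceeding $T$ is no longer your target $c$. A deterministic shift merely relabels prices, and an extra $\{0,L\}$-Bernoulli only creates additional clusters with the same internal shape.

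The paper resolves precisely this tension by abandoning $p_i=1/2$: it makes the probabilities of the first $n$ attributes exponentially small, so that every intermediate price earns at most about $B(p+\epsilon)$ with $\epsilon=2^{-n}(n+1+\sum_j a_j)^{-2}$ and is dominated by price $1$ or price $T+1$, and it binary-searches the \emph{probability} of the auxiliary attribute to the exact indifference point $p^*$ (shown to have polynomial bit length). This creates the need for a second idea your write-up lacks: with tiny $p_1$, the quantity one learns is $\Pr[V_n\ge T]=\sum_k p_1^k(1-p_1)^{n-k}S(k,T)$, not $m/2^n$, and the paper recovers the counts $S(k,T)$ as the digits of $\Pr[V_n\ge T]/p_1^n$ written in base $1/p_1-1>2^n$. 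Without either a rigorous stray-price argument for $p_i=1/2$ (which the example above suggests is impossible in general) or some version of this small-probability-plus-digit-extraction machinery, your reduction does not go through.
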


\begin{proof}
We show how to use oracle access to the SoAP problem to solve the counting analog of the SUBSET-SUM problem, defined next, which is  $\#P$-complete.\footnote{Indeed, the reduction from SAT to SUBSET-SUM as presented in \cite{Sipser2006} is parsimonious.}

\medskip \noindent {$\#$-SUBSET-SUM}: Given as input a set of positive integers $\{a_1, a_2, \ldots, a_{n}\}$ and a positive integer $T \leq \sum_i a_i$, the goal is to determine the number of subsets of the $a_i$'s which sum to at least $T$. 

\medskip \noindent The idea of our reduction is to design an instance of the SoAP problem with n+1 attributes for which the optimal price is one of two possible prices. A single parameter (in particular, the probability $p_{n+1}$ of the last attribute) determines which of these two prices is optimal. By repeatedly querying a SoAP oracle with varying values of $p_{n+1}$, we can determine the exact threshold value of $p_{n+1}$, which provides sufficient information to deduce the answer to the $\#$-subset sum instance.

\smallskip We proceed to provide the details of our reduction. Given an instance of the $\#$-subset sum problem, we create an instance of SoAP with $n+1$ attributes, where for all $i \in \{1,\ldots,n\}$ we take $u_i = a_i$ and $v_i = 0$, while for the last attribute we take $u_{n+1} = T+1$ and $v_{n+1}= 1$.
Moreover, for all $i \in \{1,\ldots,n\}$, we set
$$p_i \triangleq \frac{1}{2^nn(n+1+\sum_{j=1}^n a_j)^2}.$$
Notice in particular that the first $n$ attributes have the same probability of taking their highest value. Moreover, the probability that all the first $n$ attributes have value 0 is:
$$\left(1 -  \frac{1}{2^nn(n+1+\sum_{j=1}^n a_j)^2}\right)^n > 1 -  \frac{1}{2^n(n+1+\sum_{j=1}^n a_j)^2}$$
i.e. very close to 1. We leave the probability $p_{n+1}$ that the last attribute takes its highest value a free parameter, which we denote by $p$ for convenience.

\medskip Now, suppose that we use price $B$ for the SoAP instance. We claim the following:
\begin{enumerate}
	\item If $B = 1$, the expected revenue is 1.
	\item If $1 < B < T+1$, then the expected revenue is at most
	$$B\left(p + \frac{1-p}{2^n(n+1+\sum_{j=1}^n a_j)^2} \right).$$
	\item If $B = T+1$, then the expected revenue is at least $p(T+1)$.
	\item If $ T+1 < B \leq T+1 + \sum_{j=1}^n a_j$, then the expected revenue is at most
	$$\left(T+1 + \sum_{i=1}^na_i \right)\left(\frac{1}{2^n(n+1+\sum_{j=1}^n a_j)^2} \right) \leq \frac{1+ \sum_{j=1}^n a_i}{2^{n-1}(n+1+\sum_{j=1}^n a_j)^2} < 1.$$
	\item If $B > T + 1+ \sum_{j=1}^n a_j$, then the expected revenue is 0.
\end{enumerate}

The fourth and fifth cases are never optimal, since they are both dominated by using $B = 1$. We claim that the second case is also never optimal. Suppose for the sake of contradiction that some integral price $B$ strictly between 1 and $T+1$ were optimal. Then we would have the following two constraints:
\begin{itemize}
	\item $B\left(p + \frac{1-p}{2^n(n+1+\sum_{j=1}^n a_j)^2} \right) \geq 1$
	\item $B\left(p + \frac{1-p}{2^n(n+1+\sum_{j=1}^n a_j)^2} \right) \geq (T+1)p$.
\end{itemize}
To show a contradiction, define for convenience $$\epsilon \triangleq \frac{1}{2^n(n+1+\sum_{j=1}^n a_j)^2}.$$
We will show that no value of $p$ exists for which both of the above constraints are simultaneously satisfied. From the first constraint, we deduce
$p + \epsilon(1-p) \geq 1/B$
and thus
$$p \geq \frac{1/B - \epsilon}{1 - \epsilon} \geq \frac{1/T - \epsilon}{1 - \epsilon} > 1/T - \epsilon,$$
where for the last inequality we used that $T \leq \sum_{j=1}^n a_j$. Moreover,
$$1/T - \epsilon \geq \frac{1}{\sum_{j=1}^n a_i} - \frac{1}{2^n(n+1+\sum_{j=1}^n a_j)^2} \geq \frac{1}{\sum_{j=1}^n a_j} - \frac{1}{2^n\sum_{j=1}^n a_j} \geq \frac{1}{2\sum_{j=1}^n a_j}.$$
Therefore, the first constraint implies that $p > \frac{1}{2 \sum a_j}.$ From the second constraint, we deduce $B(p + \epsilon(1-p)) \geq (T+1)p$
and thus
$$p \leq \frac{B \epsilon}{T+1 - B + B \epsilon},$$
where we used that $B \le T$ so $T+1 - B + B\epsilon > 1$. We further have
$$p < B\epsilon \leq T \epsilon \leq \sum_{j=1}^n a_j \epsilon = \frac{\sum_{j=1}^n a_j}{2^n(n+1+\sum_{j=1}^n a_j)^2}< \frac{1}{2 \sum_{j=1}^n a_j}.$$
We get a contradiction as both constraints on $p$ cannot be satisfied simultaneously. In summary, we have shown the following:

\begin{center}
\emph{``For any $p$, the optimal price is either 1 or $T+1$.''}
\end{center}

We also note the following monotonicity property. If, for some $p$, the optimal price is $T+1$, then the optimal price is $T+1$ for any $p' > p$.\footnote{This follows from the fact that the expected revenue from selling at $T+1$ will only increase as $p$ increases.} Therefore, there exists a unique $p^*$ for which the expected revenue of selling at price $T+1$ is exactly the same as the expected revenue of selling at price 1. 

Suppose that we knew some  $p^*$ such that the expected revenue of selling at $T+1$ is exactly $1$. Then, if we denote by $V_n$ the total value of the first $n$ attributes, $p^*$ should satisfy: 
$$1 = (T+1)\left(p^* + (1-p^*)P[V_n \geq T] \right);$$
so
$$P[V_n \geq T] = \frac{1/(T+1) - p^*}{1 - p^*}.$$
Therefore, it is simple arithmetic to compute $P[V_n \geq T]$ from $p^*$. We also note that
$$P[V_n \geq T] = \sum_{k=0}^n p_1^k (1-p_1)^{n-k} \cdot S(k,T) = p_1^n \cdot \sum_{k=0}^n \left(\frac{1-p_1}{p_1}\right)^{n-k} \cdot S(k,t),$$
where $S(k,T)$ is the number of size $k$ subsets of the $a_i$'s which sum to at least $T$. By our choice of $p_1$ being sufficiently small, we know that $\frac{1-p_1}{p_1} = \frac{1}{p_1} - 1$ is an integer greater than  $2^n$. Therefore, the $S(k,t)$ are the unique integers in the base-$(\frac{1}{p_1}-1)$ representation of $P[V_n \geq T]/p_1^n$, and can be found efficiently. So given $p^*$ we can compute the total number of subsets of the $a_i$'s that sum up to at least $T$, thereby solving the given instance of $\#$-SUBSET SUM.



It remains to argue that we can compute $p^*$ using oracle access to SoAP. We do binary search on $p$ while maintaining all other parameters of the SoAP instance fixed, as described above. In every step of the binary search, we solve the corresponding SoAP instance, determining if the optimal price is 1 or $T+1$ and respectively increasing or decreasing the value of $p$ for the next step, until we have pinned down $p^*$ exactly. To argue that this takes polynomial time we notice that:
$$p^* = \frac{1/(T+1) - P[V_n \geq T]}{1 -  P[V_n \geq T]}.$$
We also notice that $P[V_n \geq T]$ is a rational number that can be specified with a polynomial number of bits.\footnote{In particular, each number of the form $p_1^i(1-p_1)^{n-i}$ has polynomial bit-length.} So $p^*$ has polynomial accuracy and we need polynomially many calls to SoAP to determine it exactly.\qed
\end{proof}

\section{Complexity of Mildly Irrational Valuations}

Issues of numerical precision may arise when analyzing value distributions which are implicitly described. Even very mild irrationality, such as the support of the distribution containing square roots of integers, can cause the resulting pricing problem to be computationally intricate. In particular, optimization may require deciding between two  mechanisms whose expected utility differs only by an exponentially small amount. In this section, we present an example of how we can reduce a numerical problem whose status even in NP remains unknown to the pricing problem for a unit-demand buyer with mildly irrational valuations.

\begin{definition}[The Square Root Sum Problem]
Given positive integers $\alpha_1 \leq \alpha_2 \leq \cdots \leq \alpha_n$ and $K$, the {SQRT-SUM} problem is to determine whether or not $\sum_{i=1}^n \sqrt{\alpha_i} > K$.
\end{definition}
\noindent While known to be in PSPACE, it remains an important open problem whether the square root sum problem is solvable in NP, let alone whether it is in P.~\cite{DBLP:journals/siamcomp/EtessamiY10,DBLP:conf/stoc/GareyGJ76}

\begin{remark}
Checking whether $\sum_i \sqrt{a_i}=K$ for positive integers $a_i$, $i=1,..,n$,
and $K$ can be done in polynomial time~\cite{DBLP:conf/stoc/GareyGJ76}. So the square root sum problem
draws its computational difficulty from instances where equality between $\sum_i
\sqrt{a_i}$ and $K$ does not hold and we need to decide whether $\sum_i \sqrt{a_i}$ is
$>$ or $<$ than $K$. In the hardness proofs of Theorems 2 and 3 we will implicitly
assume that the given instance of the square root sum problem satisfies $\sum_i
\sqrt{a_i} \neq K$. Given such instance we will construct an unit-demand pricing
instance whose solution answers the question of whether $\sum_i
\sqrt{a_i}$ is $>$ or $<$ than $K$.
\end{remark}

\begin{remark}
The important computational difference between the square root of an integer and the sum of square roots of multiple integers is that the $i$-th bit of the former can be computed in time polynomial in $i$ and the number's description complexity, while the same is not known to be true for the latter.
\end{remark}

\begin{theorem} \label{thm:irrational values}
The unit-demand pricing problem is {SQRT-SUM}-hard when the item values are independent of support two with rational probabilities and each possible item value is the square root of an integer.\footnote{The item values are mildly irrational since the $i$-th bit of the square root of an integer can be computed exactly in time polynomial in $i$ and the description complexity of the integer.}
\end{theorem}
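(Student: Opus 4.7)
The plan is to reduce SQRT-SUM to optimal unit-demand pricing. Given a SQRT-SUM instance $(\alpha_1,\ldots,\alpha_n,K)$ (which by the preceding remark satisfies $\sum_i\sqrt{\alpha_i}\neq K$), I construct a pricing instance with $n+1$ items satisfying the stated constraints: for $i\le n$, item $i$ has value $\sqrt{M\alpha_i}$ with probability $p$ and value $0$ with probability $1-p$ (where $M$ is an integer scale factor and $p$ a small rational, both of polynomial bit-complexity), while item $n+1$ has value $K=\sqrt{K^2}$ with probability $q$ and value $0$ with probability $1-q$. All values are square roots of non-negative integers, all probabilities are rational, and every item has support of size two.

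The construction is designed so that the seller's optimum reduces to a binary choice between (A) pricing items $1,\ldots,n$ at their respective high values $\sqrt{M\alpha_i}$ while excluding item $n+1$, earning revenue essentially proportional to $\sum_i\sqrt{\alpha_i}$; and (B) pricing item $n+1$ at $K$ while excluding items $1,\ldots,n$, earning revenue essentially proportional to $K$. With an appropriate tie-breaking rule allowed by the problem specification (e.g., in favor of the cheapest or the most expensive item), I argue that mixing items from both groups is dominated: because item $n+1$'s high value is so much bigger than any single $\sqrt{M\alpha_i}$ (after scaling) and the chosen tie-break favors one side consistently, offering both groups causes the buyer to always pick from one side, nullifying any added revenue from the other. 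Hence the optimal revenue equals $\max(\text{Rev}_A,\text{Rev}_B)$, and $\text{Rev}_A>\text{Rev}_B$ iff $\sum_i\sqrt{\alpha_i}>K$, encoding the SQRT-SUM decision.

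The main obstacle is exactness. SQRT-SUM admits instances with exponentially small gaps between $\sum_i\sqrt{\alpha_i}$ and $K$, so approximate first-order reasoning is insufficient. My remedy is to keep the revenues fully symbolic: $\text{Rev}_A$ is a rational linear combination of $\sqrt{\alpha_1},\ldots,\sqrt{\alpha_n}$ (with weights determined by the tie-break and by exact probabilities of each item being the unique ``winner'' of the tie-break among high items), while $\text{Rev}_B$ is a rational multiple of $K$. The inequality $\text{Rev}_A\gtrless\text{Rev}_B$ is then itself a generalized SQRT-SUM question, polynomial-time equivalent to the standard form by clearing denominators and absorbing each rational weight $c_i$ into the radicand as $\sqrt{c_i^2\alpha_i}$ (an integer under the root after suitable rescaling via $M$). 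A secondary step is verifying that combined strategies indeed cannot beat $\max(\text{Rev}_A,\text{Rev}_B)$: this is handled by choosing $p$ small enough and selecting the tie-breaking rule so that higher-order events in which multiple items simultaneously realize their high values contribute only in a controllable, symbolically tractable manner. The upshot is that the oracle for optimal unit-demand pricing exactly decides the SQRT-SUM instance.
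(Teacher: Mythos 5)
Your high-level plan (two competing pricing schemes whose comparison encodes $\sum_i\sqrt{\alpha_i}\gtrless K$) is the right one, but two steps in your construction do not hold up. First, the claimed dichotomy fails. Because your item $n+1$ is supported on $\{0,K\}$, the scheme that prices \emph{every} item at its high value strictly dominates both of your pure schemes: when item $n+1$ realizes $K$ the seller collects $K$, and when it realizes $0$ the seller still collects from the highest high-valued item among $1,\ldots,n$, for revenue $qK+(1-q)\mathrm{Rev}_A>\max(\mathrm{Rev}_A,qK)$. So mixing is not ``nullified'' by any tie-breaking rule; the optimal prices look the same on both sides of the SQRT-SUM threshold and the oracle answer reveals nothing. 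The paper avoids exactly this by giving item $n+1$ support $\{T/2,T\}$ with $T/2$ strictly exceeding every other possible value: then pricing item $n+1$ at $T/2$ guarantees $T/2$ but forecloses any revenue from the other items (item $n+1$ always wins), while pricing it at $T$ yields $T$ with probability $1/2-\epsilon$ plus the maximum of the other items' values with probability $1/2+\epsilon$. That is the tension your construction is missing.

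Second, even granting a dichotomy, your handling of the weights goes in the wrong direction. With a uniform probability $p$ and tie-breaking toward the most expensive item, item $i$ is the winner with probability $p(1-p)^{n-i}$, so $\mathrm{Rev}_A=\sum_i p(1-p)^{n-i}\sqrt{M\alpha_i}$, which is not proportional to $\sum_i\sqrt{\alpha_i}$, and with exponentially small gaps the $O(p^2)$ discrepancy cannot be waved away. Rewriting $c_i\sqrt{\alpha_i}$ as $\sqrt{c_i^2\alpha_i}$ only shows that the comparison your gadget decides is \emph{some} square-root-sum question; for hardness you must arrange that it decides the \emph{given} instance. You would need either to pre-compensate the radicands by the inverse squared weights (choosing integer radicands $\beta_i$ with $c_i\sqrt{\beta_i}=\lambda\sqrt{\alpha_i}$ for a common $\lambda$), or, as the paper does, choose item-specific probabilities $p_i=1/i$ so that each of items $1,\ldots,n$ is the top high-valued item with probability exactly $1/n$; then the relevant revenue is exactly $(1/2-\epsilon)T+(1/2+\epsilon)\tfrac1n\sum_i\sqrt{\alpha_i}$, and with $T=(1/2+\epsilon)K/(n\epsilon)$ the comparison with $T/2$ lands exactly at $\sum_i\sqrt{\alpha_i}\gtrless K$. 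As written, your proposal is missing both the gadget that forces the binary trade-off and the exact equalization of the winner probabilities, and these are the two essential ideas of the proof.
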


\begin{proof}


We will reduce SQRT-SUM to the pricing problem for a single unit-demand buyer whose values for the items are distributed independently, take one of two possible values with rational probabilities, and each of these possible values is the square root of an integer.

\smallskip Given an input $\alpha_1 \leq \alpha_2 \leq \cdots \leq \alpha_n$ and $K$ to the SQRT-SUM problem, we construct an input to the unit-demand pricing problem with $n+1$ items.
 For $i = 1,\ldots, n$, item $i$ has value $\sqrt{\alpha_i}$ with probability $1/i$, and value $0$ with probability $1 - 1/i$. Finally, item $n+1$ has value $T/2$ with probability $1/2 + \epsilon$ and value $T$ with probability $1/2 - \epsilon$, where:
$$\epsilon \triangleq \frac{K}{4n\max({K, \alpha_n})} \leq \frac{1}{2}; \qquad T \triangleq  \frac { (1/2 + \epsilon) K } { n \epsilon}.$$
\noindent Notice that $T/2 > \frac K {4 n \epsilon}= \max({K, \alpha_n}) \geq \alpha_n \geq \sqrt{\alpha_n}$.

\smallskip We now claim that the optimal expected revenue for the unit-demand pricing instance we defined is the maximum of $T/2$ and $$(1/2-\epsilon)T + \frac{1/2 + \epsilon}{n}\left( \sqrt{\alpha_1} + \cdots + \sqrt{\alpha_n} \right).$$

Indeed, it is clearly possible to achieve revenue $T/2$ by pricing item $n+1$ at $T/2$ and all other items at a price greater than $\sqrt{\alpha_n}$. Since $T/2 > \sqrt{\alpha_n}$, if item $n+1$ is priced less than or equal to $T/2$, the revenue cannot be higher than $T/2$.

Now what if item $n+1$ were priced at a price higher than $T/2$? Suppose, e.g., that we price item $n+1$ at $T$ and all other items $i$ at $\sqrt{\alpha_i}$. Then the expected revenue we would get is\footnote{Suppose that ties are broken in favor of the most expensive item.}
\begin{align}
(1/2 - \epsilon)T + (1/2 + \epsilon) \left( \frac{1}{n}\sqrt{\alpha_n} + \frac{n-1}{n}\cdot \frac{1}{n-1}\sqrt{\alpha_{n-1}} + \cdots +\frac{1}{n}\sqrt{\alpha_1} \right)\label{eq:revenue2}
\end{align}
We claim that this is the best revenue we could possibly achieve if item $n+1$ is priced at a price higher than $T/2$. Indeed, it is easy to see that the maximum of the values of items $1,\ldots,n$ is independent of the value of item $n+1$, it has expectation ${1\over n} \sum_i \sqrt{\alpha_i}$ and, because $T/2 > \sqrt{\alpha_n}$, it is smaller than $T$ with probability $1$. So consider any pricing where the price of item $n+1$ is larger than $T/2$. In the event that the value of item $n+1$ is $T$ (which happens with probability exactly $1/2-\epsilon$) the best revenue that the pricing could possibly get is at most $T$, while in the event that the value of item $n+1$ is $T/2$ (which happens with probability exactly $1/2+\epsilon$) the revenue cannot exceed the maximum of the values of items $1,\ldots,n$ which has expectation ${1\over n} \sum_i \sqrt{\alpha_i}$ even after conditioning on the value of item $n+1$ as it is independent from the value of item $n+1$.


Observe that~\eqref{eq:revenue2} is higher than $T/2$ if and only if
$$\epsilon T < \frac{(1/2  + \epsilon)}{n}\left( \sqrt{\alpha_1} + \cdots + \sqrt{\alpha_n} \right),$$
which occurs precisely when $K <   \sqrt{\alpha_1} + \cdots + \sqrt{\alpha_n}.$\qed
\end{proof}

\section{Complexity of Mildly Irrational Probabilities}

The reduction of the previous section used distributions that were supported on irrational values. A possible critique of this in a discrete setting is that it may be unnatural for an individual to hold irrational values for an item. Contrastingly, it seems more natural to allow for a person's values to be rational but to depend on certain mildly irrational probabilities.

Perhaps the simplest form of an irrational probability is one for which we
can efficiently compute arbitrary bits of its binary expansion
correctly.\footnote{This property is satisfied, for example, by a probability
of the form $\sqrt{r}$, where r is a rational number; but, as remarked in section
4, it is unknown whether it is satisfied by a probability of the form $\sum_i
\sqrt{r_i}$, for rational $r_i$'s.} Notice that using a fair coin to sample
exactly such probability, e.g. $\sqrt{1/3}$, is no more work than sampling
exactly a rational probability, e.g. $1/3$: Imagine an infinite sequence of coin
tosses. We reveal a prefix of that sequence until, viewed as a binary number,
we can certify that the sequence lies above or below the target probability
written in binary; if above, we output $1$, otherwise we output $0$.

We now consider unit-demand pricing instances as in the previous section, except where the values are integral and the probabilities are irrational. As in the previous section, we will give a SQRT-SUM-hardness reduction.


\begin{theorem}
The unit-demand pricing problem is {SQRT-SUM}-hard when the item values are independent of support two, have probabilities for which the $i^{th}$ bit of their binary expansions can be computed in time polynomial in $i$, and each possible item value is integral.
\end{theorem}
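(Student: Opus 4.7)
The plan is to adapt Theorem~2's reduction by interchanging the roles of values and probabilities: items $1,\ldots,n$ will have a common integer value $V=M$ (for a large integer $M$ to be chosen), and ``irrational'' probability $p_i=\sqrt{\alpha_i}/M$ of taking value $V$ (otherwise value $0$). Each such $p_i$ has an efficiently computable binary expansion because $\sqrt{\alpha_i}$ does and division by an integer preserves this property; crucially, every $p_i$ depends on a single square root (not a sum of square roots), so we avoid the very computational obstacle the theorem is built around. Item $n+1$ will serve as a ``divider'' with the same two-point structure as in Theorem~2, now taking integer values $T/2$ and $T$.

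Concretely, I would set $V=M$, $p_i=\sqrt{\alpha_i}/M$, and give item $n+1$ value $T/2$ with probability $1/2+\epsilon$ and value $T$ with probability $1/2-\epsilon$, choosing rational $\epsilon$ and integer $T$ so that $T/2$ is an integer exceeding $V$ and $\epsilon T/(1/2+\epsilon)=K$ exactly (for example, $\epsilon=1/(2L)$ and $T=K(L+1)$ for an odd integer $L$ chosen large enough). Following the case analysis of Theorem~2 verbatim, the only pricings that can be optimal are Mechanism~A (price item $n+1$ at $T/2$, all others prohibitively; revenue $T/2$) and Mechanism~B (price item $n+1$ at $T$, each item $i\le n$ at $V$; revenue $(1/2-\epsilon)T+(1/2+\epsilon)\cdot M(1-\prod_i(1-p_i))$, since conditional on $v_{n+1}=T/2$ the buyer falls back to some ``on'' item among $1,\ldots,n$ at price $V$). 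Mechanism~B dominates Mechanism~A precisely when $M(1-\prod_i(1-p_i))>K$.

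The delicate step is that $M(1-\prod_i(1-p_i))$ does not exactly equal $\sum_i\sqrt{\alpha_i}$. Expanding the product gives $M(1-\prod_i(1-p_i))=\sum_i\sqrt{\alpha_i}-(1/M)\sum_{i<j}\sqrt{\alpha_i\alpha_j}+\cdots$, an underestimate of error at most $O(n^2\alpha_n/M)$. This discrepancy cannot be cancelled within the bit-efficiency constraint because doing so would force the $p_i$ to involve sums of several $\sqrt{\alpha_j}$'s. The remedy is to take $M$ as an exponentially large integer of polynomial bit-length, so that the error is strictly smaller than the classical separation bound $|\sum_i\sqrt{\alpha_i}-K|\geq 2^{-\mathrm{poly}(n,\log\alpha_n)}$ that holds whenever the two quantities differ (this is the same bound alluded to in the remark after Definition~3, which permits polynomial-time equality testing). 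With this choice of $M$, $M(1-\prod_i(1-p_i))$ and $\sum_i\sqrt{\alpha_i}$ lie on the same side of $K$, so Mechanism~B beats Mechanism~A if and only if $\sum_i\sqrt{\alpha_i}>K$, yielding the SQRT-SUM decision.

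The main obstacle is precisely this quantitative error-control. Unlike Theorem~2, whose cascade $p_i=1/i$ made each item's revenue contribution equal to $\sqrt{\alpha_i}/n$ exactly, here the unavoidable product structure $\prod_i(1-p_i)$ introduces second-order corrections that must be beaten by the SQRT-SUM gap through the choice of $M$. The remaining ingredients---verifying that no pricing of item $n+1$ at a price strictly between $T/2$ and $T$, or of items $1,\ldots,n$ at a price other than $V$, can exceed the maximum of Mechanisms~A and~B---mirror the bound-chasing case analysis of Theorem~2 and introduce no essentially new ideas.
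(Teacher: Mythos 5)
The proposal has a fatal gap at exactly the step you flag as delicate. Your reduction decides whether $M\bigl(1-\prod_i(1-p_i)\bigr)>K$, and this quantity undershoots $\sum_i\sqrt{\alpha_i}$ by roughly $\frac{1}{M}\sum_{i<j}\sqrt{\alpha_i\alpha_j}$; to conclude that it lies on the same side of $K$ as $\sum_i\sqrt{\alpha_i}$ you invoke a ``classical separation bound'' $|\sum_i\sqrt{\alpha_i}-K|\ge 2^{-\mathrm{poly}(n,\log\alpha_n)}$. No such bound is known: the known root-separation bounds for $\sum_i\sqrt{\alpha_i}-K$ (degree up to $2^n$ over $\mathbb{Q}$, product of conjugates an integer) give only a gap of order $2^{-O(2^n\log(n\alpha_n K))}$, and whether a $2^{-\mathrm{poly}}$ bound holds is essentially the open question that makes SQRT-SUM hard in the first place --- if it were true, one could decide SQRT-SUM in polynomial time by approximating each $\sqrt{\alpha_i}$ to polynomially many bits, making the whole reduction moot. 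The remark after Definition~3 only licenses assuming $\sum_i\sqrt{\alpha_i}\neq K$, not a lower bound on the gap. Consequently, to beat the gap you would need $M$ with exponentially many bits, which destroys polynomiality; with $M$ of polynomial bit-length your reduction is only correct on promise instances with a large gap, which are exactly the easy ones.

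The paper avoids approximation entirely rather than controlling it. It sets $a_{n+1}=X^2$ for a single large integer $X>\max\{3K/n,a_n\}$, gives item $i$ the value $i$ with probability $p_i=1-\sqrt{a_i/a_{i+1}}$ (each a single square root of a rational, so bit-computable), and exploits telescoping: $\prod_{j>i}(1-p_j)=\sqrt{a_{i+1}/a_{n+1}}$, so the cascade revenue $\sum_i i\,p_i\prod_{j>i}(1-p_j)$ collapses by Abel summation to exactly $n-\frac{1}{X}\sum_i\sqrt{a_i}$. With item $n+1$ taking values $T/2,T$ with probabilities $3/4,1/4$ and $T=3(n-K/X)$, the comparison between the two candidate schemes becomes exactly $\sum_i\sqrt{a_i}>K$, with no error term to control. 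If you want to salvage your construction you would need a similar exactness device (probabilities whose relevant products remain single square roots), not a separation-bound argument.
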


\begin{proof}
Let $a_1 \le ... \le a_n$ and $K$ be an instance of the SQRT-SUM problem. Also let $X$ be a large integer with $X > \max\{3 K / n,a_n\}$. We define $a_{n+1} = X^2$ maintaining the monotonicity of the sequence $a_i$ since $X > a_n$.

We reduce the given SQRT-SUM instance to an instance of the unit-demand pricing problem with $n+1$
items. For $i = 1,..., n$, item $i$ has value $i$ with probability $p_i = 1 - \sqrt{a_i / a_{i+1}}$, and value $0$ with probability $\sqrt{a_i / a_{i+1}}$. Finally, item $n+1$ has value $T/2$ with probability $3/4$ and value $T$ with probability $1/4$, where:
$$T \triangleq 3 \left(n - {K\over X}\right).$$
Notice that by the choice of $X > 3 K / n$ we have that $T/2 > n$, the highest
possible value of any other item. Also, since the sequence of $a_i$'s is non-decreasing, all probabilities $p_i$ are well defined.

As in the proof of Theorem~\ref{thm:irrational values}, we can argue that the optimal pricing either prices item $n+1$ at $T/2$ and the other items at infinity (call this  ``Scheme 1''),  or
prices all items at their high value (call this ``Scheme 2''). In the former case the revenue is $T/2$. In the latter case the bidder will choose to buy the largest item he values high, i.e. will choose item $n+1$ if he values it high, otherwise item $n$ if he values it high, and so on.\footnote{As in the proof of Theorem~\ref{thm:irrational values} we assume that ties are broken  in favor of the most expensive item.} Therefore, Scheme 1 beats Scheme 2 if and only if:
$$\frac T 2 > \frac T 4  + \frac 3 4 ( p_n n + p_{n-1} (1-p_n) (n-1) + ... + p_1 \prod_{i=2}^n (1-p_i) ),$$
which becomes, after substituting for the $p_i$'s:
$$ \frac{T}{2} > \frac{T}{4} + \frac{3}{4}\sum_{i=1}^n \left(i \left( \sqrt{\frac{a_{i+1}}{a_{n+1}}} - \sqrt{\frac{a_i}{a_{n+1}}} \right) \right).$$
Simplifying and using the fact that $\sqrt {a_{n+1}} = X$, our condition becomes
$$\frac T 2 > \frac T 4  + \frac 3 4 \left( n - \frac { \sum_{i=1}^n \sqrt{a_i} } X\right).$$
This occurs precisely when:
$$ \sum_{i=1}^n \sqrt{a_i} > X ( n - T/3 ) = K.$$

\noindent Therefore, Scheme 1 is strictly better than Scheme 2 precisely when $\sum_{i=1}^n \sqrt { \alpha_i } > K $, concluding our reduction from the SQRT-SUM problem.\qed
\end{proof}

\section{Future Work}
Studying the complexity of optimal pricing in a Bayesian context is an important question, both theoretically and practically. However,  to have a robust complexity model, great care must be taken in specifying the input distributions. Indeed, as shown in this paper, implicit distributions can easily embed hard problems into the distribution's parameters, and therefore any complexity theoretic model of pricing must take into account the complexity of the distributions themselves, and not just the length of a minimal specification.

A setting that avoids the computational barriers raised in this paper is that of several items, each distributed independently on some finite size support, with all values and probabilities rational and explicitly given. This problem is not yet resolved for either unit-demand or additive bidders. Moreover, while our paper has focused only on discrete distributions, issues of distributional specification are perhaps even more vital if one wishes to model the complexity of pricing with continuous distributions. It is of interest to propose a robust computational framework for studying the pricing problem with continuous distributions.

Finally, our results  apply to computing the optimal deterministic mechanism, which in the case of a single buyer is tantamount to finding an optimal pricing scheme. It is an important open question to determine the complexity of the optimal mechanism design problem when randomized mechanisms are also allowed.

\vspace{-7pt}\bibliographystyle{plain}
\bibliography{mybib}

\end{document}